\numberwithin{equation}{section}
\theoremstyle{plain}
\newtheorem{proposition}{Proposition}[section]
\begin{document}

\begin{frontmatter}
\title{A Conservation Law for Posterior Predictive Variance}
\runtitle{On Posterior Predictive Variance}

\begin{aug}
\author{\fnms{Dean} \snm{Dustin}\thanksref{addr1}\ead[label=e1]{ddustin8@huskers.unl.edu}},
\author{\fnms{Bertrand} \snm{Clarke}\thanksref{addr2}\ead[label=e2]{bclarke3@unl.edu}}

\runauthor{Dustin and Clarke}

\address[addr1]{Charles Schwab Financial, Denver, CO
    \printead{e1} 
}

\address[addr2]{Department of Statistics, University of Nebraska-Lincoln, NE, USA, 68583-0963
  \printead{e2}
}


\end{aug}

\begin{abstract}
We use the law of total variance to generate multiple expressions for the posterior predictive variance 
in Bayesian hierarchical models.   These expressions are sums of terms involving
conditional expectations and conditional variances.    Since the posterior predictive
variance is fixed given the hierarchical model, it represents a constant quantity that is
conserved over the various expressions for it.   The terms in the expressions can be
assessed in absolute or relative terms to understand the main contributors to the
length of prediction intervals.   Also, sometiems these terms can be intepreted
in the context of the hierarchical model.  We show several examples, closed form
and computational, to illustrate the uses of this approach in model assessment.
\end{abstract}

\begin{keyword}[class=MSC]
\kwd[Primary ]{62F15}
\kwd[; secondary ]{62J10}
\end{keyword}

\begin{keyword}
\kwd{prediction interval} \kwd{posterior predictive variance} \kwd{law of total variance}
\kwd{Bayes model averaging}
\end{keyword}

\end{frontmatter}

\section{The Setting and Intution}
\label{Intro}

Consider a generic Bayesian hierarchical model for a response $Y=y$ given 
$V = V_K= (V_1, \ldots , V_k, \ldots, V_K)^T$ taking values $v = (v_1, \ldots , v_K)^T$
for
some $K \in \mathbb{N}$:
\begin{eqnarray}
V_1 &\sim& w(v_1) \nonumber \\
V_2 &\sim& w(v_2 \vert v_1) \nonumber \\
\vdots & \vdots  &  \vdots  \nonumber \\
V_K &\sim& w(v_K \vert v_1, \ldots , v_{K-1}) \nonumber \\
Y &\sim& p(y \vert v),
\label{hierarchicalmodelgeneric}
\end{eqnarray}
where the $w$'s represent prior densities for the $V_k$'s as indicated by their
arguments and $p(\cdot \vert v)$ is the likelihood.   All densities are with respect
to Lebesgue measure when the random variable is continuous.  For discrete random variables
we regard the density as being taken with respect to counting measure.
We denote $n$ outcomes of $Y$ by $Y^n = (Y_1, \ldots , Y_n)^T$ with outcomes
$y^n = (y_1, \ldots , y_n)^T$.  

It is common practice to adopt an estimation perspective.  That is,
choose a parameter,  here one of the $V_k$'s, and obtain credibility sets for it
from the posterior $w(v_k \vert y^n)$.  If the credibility set
for a given $V_k$ is sufficiently small as determined by hypothesis testing, say, then it may make
sense to drop the $k$-th level of the hierarchy.  However, it is unclear in the abstract how to
compare the length of a credibility set for one $V_k$ to the length of a credibility set for
$V_{k^\prime}$ for $k \neq k^\prime$.  Aside from asymptotics usually based on the Fisher information,
there is no common scale on which the variances of different
$V_k$'s can be compared.   The reason is that the 
size of $\hbox{Var}(V_k \vert y^n)$ is unrelated to
the size of $\hbox{Var}(V_{k^\prime} \vert y^n)$.   Nothing necessarily 
ties the $K$ marginal posteriors $w(v_k \vert y^n)$ together with a common scale
p[re-asymptotically.    Indeed, when estimating
a value of $v_k$, it is not in general clear how the sizes of other $v_{k^\prime}$'s affect it.

An alternative analysis of hierarchical models follows from a predictive perspective.
Instead of looking at posterior variances, we look at terms that sum to the posterior
predictive variance and compare their relative importance.  Without further discussion,
we assume that posterior means and variances in general are the right quantities to study.  This is true under
squared error loss.  However, other choices of loss function would yield different, but analogous, reasoning.

Given $y^n$,  we assign the posterior predictive density to future values $Y_{n+1}$, that is 
\begin{eqnarray}
Y \sim
p(y_{n+1} \vert y^n) = \int p(y_{n+1} \vert v) w(v \vert y^n) {\rm d} v,
\label{postpredfut}
\end{eqnarray}
where $w(v \vert y^n)$ is the posterior density and ${\rm d}v$ is summation or integration
as appropriate.  At this point the posterior predictive variance within the context of
the model \eqref{hierarchicalmodelgeneric}
is fixed.  Denote it $\hbox{Var}(Y_{n+1} \vert y^n)$.   When a random variable in the top
$K$ levels of the hierarchy is visble we say it is explicit  Otherewise we say it is implicit.
Thus,  $\hbox{Var}(Y_{n+1} \vert y^n)$ depends implicitly on the top $K$ levels of \eqref{hierarchicalmodelgeneric}.

Recall the standard probability theory result called the Law of Total Variance (LTV).  Generically,
for random variables $W$ and $Z$ on the same probability space it is
\begin{eqnarray}
\hbox{Var}(W)  = E [ \hbox{Var} (W\vert Z) ]+ \hbox{Var} [E (W \vert Z)].
\label{LTV}
\end{eqnarray}
By reinterpreting \eqref{LTV} in the posterior context we have
\begin{eqnarray}
\hbox{Var}(W \vert y^n)  =  \hbox{Var}_{W\vert y^n} (W \vert y^n) = E_{Z \vert y^n} [ \hbox{Var} (W\vert Z, y^n) ]+ \hbox{Var}_{Z \vert y^n} [E (W \vert Z, y^n)],
\label{LTVP}
\end{eqnarray}
assuming that $W$ and $Z$ are functions on the same probability space as used to
write \eqref{hierarchicalmodelgeneric}.  In \eqref{LTVP}
the outer expectations on the right, usually suppresed,  are indicated and
both sides are functions of $y^n$.
 
The predictive approach takes $W$ to be a future value $Y_{n+1}$, rather than any of the $v_k$'s.
For generality, we will also take $y^n$ to be the pre-$n+1$ data, and hence condition
on ${\cal{D}} = {\cal{D}}_n$.    
In contrast to $y^n$,  ${\cal{D}}$ may include values of explanatory variables for each time step.
Unless stated otherwise, we assume the data is independent from time step to time step.
Now we have 
\begin{eqnarray}
\hbox{Var}_{Y_{n+1} \vert {\cal{D}} }( Y_{n+1} \vert {\cal{D}}) = E_{Z \vert {\cal{D}}} [ \hbox{Var} ( Y_{n+1}\vert Z, {\cal{D}}) ]+ \hbox{Var}_{Z \vert {\cal{D}}} [E (Y_{n+1} \vert Z, {\cal{D}})].
\label{postpredvar}
\end{eqnarray}
Independent of the choice of $Z$, the left hand side of \eqref{postpredvar} is a constant
depending only on the hierarchy \eqref{hierarchicalmodelgeneric} and ${\cal{D}}$.  That is,
\eqref{postpredvar} is a conservation law for the posterior predictive variance over choices
of conditioning.    We can choose $Z$ to be any function of a subset of the entries of $V$.
In particular, if $Z= V_1$, we get
\begin{eqnarray}
\hbox{Var}_{Y_{n+1} \vert {\cal{D}} }( Y_{n+1} \vert {\cal{D}}) = E_{V_1 \vert {\cal{D}}} [ \hbox{Var} ( Y_{n+1}\vert V_1, {\cal{D}}) ]+ \hbox{Var}_{V_1 \vert {\cal{D}}} [E (Y_{n+1} \vert V_1, {\cal{D}})].
\label{postpredvarV1}
\end{eqnarray}

More is true.   The LTV can be applied iteratively to the terms in \eqref{postpredvar}.
 It is seen that
$ \hbox{Var} ( Y_{n+1}\vert Z, {\cal{D}})$, the first term on the right in \eqref{postpredvar},
 is of the same form as the left hand side of
\eqref{postpredvar} -- simply replace ${\cal{D}}$ by $(Z, {\cal{D}})$.
If we take $Z = V_1$, condition on $V_1 = v_1$, and use another instance of the LTV, this time with $Z = V_2$, we get
\begin{eqnarray}
\hbox{Var} ( Y_{n+1}\vert v_1, {\cal{D}}) = 
E_{V_2 \vert v_1,{\cal{D}}} [ \hbox{Var} ( Y_{n+1}\vert V_1, V_2, {\cal{D}}) ]+ \hbox{Var}_{V_2 \vert v_1, {\cal{D}}} [E (Y_{n+1} \vert V_1, V_2, {\cal{D}})].
\label{LTVgeniter}
\end{eqnarray}
Using \eqref{LTVgeniter} in \eqref{postpredvar} we get, with some simplification of notation,
\begin{eqnarray}
\hbox{Var}( Y_{n+1} \vert {\cal{D}}) &=& 
E_{V_1 \vert {\cal{D}} } E_{V_2 \vert V_1,{\cal{D}}} [ \hbox{Var} ( Y_{n+1}\vert V_1, V_2, {\cal{D}}) ] \nonumber \\
&& +
E_{V_1 \vert {\cal{D}} } \hbox{Var}_{V_2 \vert V_1, {\cal{D}}} [E (Y_{n+1} \vert V_1, V_2, {\cal{D}})] \nonumber\\
&& +
\hbox{Var}_{V_1 \vert {\cal{D}}} [E (Y_{n+1} \vert V_1, {\cal{D}})].
\label{LTV2termgen}
\end{eqnarray}

Now, \eqref{postpredvarV1} and \eqref{LTV2termgen} are two expressions for the same
$\hbox{Var}( Y_{n+1} \vert {\cal{D}})$.    They are generic in that the role of $V_1$ and
$V_2$ can be played by any two functions of entries of $V$.  That is, the posterior predictive
variance admits a very large number of two term and three term generic expressions.

This procedure can be iterated in multiple ways to include any other $V_k$, thereby generating even more
expressions for $\hbox{Var}( Y_{n+1} \vert {\cal{D}})$.  Indeed, every time an
expression of the form $\hbox{Var}(Y_{n+1} \vert W, {\cal{D}})$ for any suitable random variable $W$ occurs from using the
LTV, the LTV can be applied again provided
a further suitable conditioning variable $Z$ can be found.  That is, 
the conservation law for posterior predictive variance in \eqref{postpredvarV1} extends to a far larger class of sums of terms
involving conditional expectations and variances than \eqref{postpredvar} initially suggests.

We call the collection of expressions for the posterior predictive variance in a fixed hierarchical model
its LTV-scope.  Thus, the posterior predictive variance is invariant or conserved over its scope.  We regard the
introduction of an extra level in a hierarhical model as creating a new model and hence a new LTV-scope.
The point of this work is not only to look within the LTV-scope of one hierarchical model but to compare
LTV-scopes across models.   Expressions in the scope of a hierarchical model also admit an interpretation
in terms of analysis of variance and associated frequentist testing, 
see \cite{Dustin:Clarke:2023}, but we do not discuss this here.
Fixing a hierarchical model and looking at the scope of its posterior predictive variance
lets us choose which decomposition has the interpretation we want to use to
decide which `components' are more important than
other components in relative or absolute terms.
Otherwise put, we can examine and compare multiple decompositions of the
posterior variance for the same hierarchical model and then compare decompositions
across hierarchical models because it is fair to compare posterior predictive variances
across models.

Expressions like \eqref{LTV2termgen} may be useful in a
practical sense as well because
predictive intervals (PI's) for $Y_{n+1}$ can be derived from the distribution of
\begin{eqnarray}
\frac{Y_{n+1} - E(Y_{n+1} \vert {\cal{D}}_n)}
{ \sqrt{Var(Y_{n+1} - E(Y_{n+1} \vert  {\cal{D}}_n) \vert {\cal{D}}_n)}  }
= \frac{Y_{n+1} - E(Y_{n+1} \vert {\cal{D}}_n)}
{\sqrt{\hbox{Var}(Y_{n+1} \vert {\cal{D}}_n) } }.
\label{PIform}
\end{eqnarray}
The denominator on the right in \eqref{PIform} is the posterior predictive variance and controls the length of the PI.
Our expressions for it allow us to identify the relative sizes of their
terms.
That is, because the posterior predictive variance ties multiple sources of variability together,
within a hierarchical model, we can look at relative contributions of terms to the PI.
For instance, it is meaningful to compare the sizes of terms such as
\begin{eqnarray}
\frac{E_{V_1 \vert {\cal{D}} } E_{V_2 \vert V_1,{\cal{D}}} [ \hbox{Var} ( Y_{n+1}\vert V_1, V_2, {\cal{D}}) ]}
{\hbox{Var}( Y_{n+1} \vert {\cal{D}})} 
\quad 
\hbox{and} 
\quad
\frac{E_{V_1 \vert {\cal{D}} } \hbox{Var}_{V_2 \vert V_1, {\cal{D}}} [E (Y_{n+1} \vert V_1, V_2, {\cal{D}})]}
{\hbox{Var}( Y_{n+1} \vert {\cal{D}})}.
\nonumber
\end{eqnarray}
A relative assessment of their contributions to the posterior predictive variance allows us to identify
the biggest contributions to the length of a PI.    Terms that do not contribute much, relatively, can be omitted
thereby identifying which terms are driving the width of PI's. We see an instance of this in an example in Sec. \ref{Draper}.

This decomposition is similar to \cite{Gustafson:Clarke:2004} who expanded the 
posterior variance $\hbox{Var}(\Theta \vert y^n)$.  However,  ours is predictive,
on a common scale, and hence
directly useful in expressions for PI's from, say,  \eqref{PIform}.
Moreover, in \cite{Gustafson:Clarke:2004}, the terms were
forced into a single `standard error' interpretation rather than treated
as distinct patterns of expectations and variances that could be interpreted
in the context of quantifying the variability in the levels of the hierarchy.

Another way these decompositions may be useful is in terms of reducing the number of levels
in the hierarchy.  Consider the last term in \eqref{LTV2termgen}.
There are two basic ways we can get
$\hbox{Var}_{V_1 \vert {\cal{D}}_n} (E(Y_{n+1} \vert V_1, {\cal{D}}_n)) = 0$.  First, the distribution of $V_1$
concentrates at a single value $V_1 = v_1$.    Second, the models i.e., values of $V_1$ that get
non-zero weights, give the same predictions given ${\cal{D}}$.  That is, 
\begin{eqnarray}
E(Y_{n+1} ; V_1=v_1, {\cal{D}}) = E(Y_{n+1} ; V_1=v_2 , {\cal{D}})
\label{intop}
\end{eqnarray}
for any $v_1$ and $v_2$ getting positive weight.
Identifyng these sets is essentially intractable.
However, by carefully selecting the models $V= v$ to ensure they
are different and having a large enough $n$ the chance of satisfying \eqref{intop} for
two values of $V_1$ will be vanishingly small. 
Thus, on pragmatic grounds, with some foresight,  if the last term on the right is chosen so it explicitly depends only
on a single component of $V$ and that term drops out i.e., is close to zero, we can simply 
set $V_1$ to be a constant
meaning that level of modeling drops out.  In a three term case
we would be left with only the first two terms on the right hand side that depend on $V_2$
in which $V_1$ was a constant.  The resulting expression reduces to \eqref{postpredvarV1}
but with $V_2$ in place of $V_1$..

The rest of this paper explores expressions such as \eqref{postpredvarV1} and \eqref{LTV2termgen}.
In Sec. \ref{Examples}, we present a variety of examples of instances of our conservation of posterior
predictive variance law.  In Sec.  \ref{decomposition} we present some generic results on
expressions in the LTV-scope of a posterior predictive variance..
In Sec.  \ref{Draper}, we revisit two examples from \cite{Draper:1995} and
by using an extension of his setting show how the relative sizes of terms in our decompositions can behave
in simulations.  
In a concluding section, Sec. \ref{Discussion}, we
discuss the methodological implications of our representations for the posterior predictive variance.

\section{Generic Examples}
\label{Examples}

The reasoning in Sec. \ref{Intro} uses a generic hierarchical model.  There is no constraint on the levels in the
hierarchy except that they combine properly on one probability space.  That is, the whole inferential structure
satisfies the containment principle. of BAyesian statistics.   Even within the containment principle, the 
range of choices for $(K, V)$ is vast and two plausible hierarchical models may have very different behaviors.
In addition, as will be seen in Sec. \ref{Draper}, conditioning variables need not have any correlate in
reality; they may be aspects of modeling more commonly thought to be part of the likelihood.   So, each
hierarchical model and its LTV-scope must be examined individually before being compared with another.

The point of this section is to present a series of examples of hierarhical models that are amenable
to our LTV iterative procedure and to give a sense for their LTV-scopes.  We begin with the simplest cases
and move on to more complicated cases to develop intution for what the expressions in
the LTV-scope of a hierarchical model mean.

\subsection{A Two Level Hierarchical Model}
\label{2level}

The simplest hierarchical model has two levels i.e., has $K=1$:
\begin{eqnarray}
\Theta &\sim& w(\theta) \nonumber \\
Y &\sim& p(y \vert \theta),
\label{2levelBHM}
\end{eqnarray}
where $w$ is the density of a real parameter $\Theta = \theta$ and $p(\cdot \vert \theta)$ is the 
conditional density of $Y=y$, both with respect to Lebesgue measure.  
The posterior density is
\begin{eqnarray}
w(\theta \vert y^n) \propto w(\theta)p(y^n \vert \theta)
\nonumber
\end{eqnarray}
with normalizing constant 
\begin{eqnarray}
m(y^n) = \int w(\theta) p(y^n \vert \theta) {\rm d} \theta.
\end{eqnarray}
The posterior predictive density is now
\begin{eqnarray}
p(y_{n+1} \vert y^n) = \int p(y_{n+1} \vert \theta)w(\theta \vert y^n) {\rm d} \theta 
\nonumber
\end{eqnarray}
with mean
\begin{eqnarray}
E(Y_{n+1} \vert y^n) = \int y_{n+1} p(y_{n+1} \vert y^n) {\rm d} y_{n+1}
\nonumber
\end{eqnarray}
and
\begin{eqnarray}
\hbox{Var}(Y_{n+1} \vert y^n) = \int (y_{n+1} - E(Y_{n+1} \vert y^n))^2 p(y_{n+1} \vert y^n) {\rm d} y_{n+1} .
\nonumber
\end{eqnarray}
So, the LTV gives the posterior predictive variance as
\begin{eqnarray}
\hbox{Var}(Y_{n+1} \vert y^n) = E_\Theta ( \hbox{Var}(Y_{n+1} \vert \Theta, y^n) \vert y^n)+  \hbox{Var}_\Theta(E(Y_{n+1} \vert \Theta, y^n) \vert y^n) .
\label{LTVsimple}
\end{eqnarray}
The first term on the right is the variability of the high posterior probability predictive distributions.
The second term on the right is an assessment of how important the model used for prediction is.
This interpretation is, in fact, independent of the fact that $\Theta$ is a real parameter.   Only two-term
examples i.e., one usage of the LTV, admit a concise intepretation in general.   We will see that 
with two or more usages of the LTV and so $K \geq 2$, we get three or more terms and the interpretation
is much more complex and depends on the choice of conditioning variables.

In some cases,  \eqref{LTVsimple} can be worked out explicitly.  Let $Y_i \sim N(\theta, \sigma)$ be independent
and identically distributed (IID) for $i = 1, \ldots, n$ where $\theta \sim N(\theta_0, \tau^2)$ and $\theta_0$, $\sigma$ and $\tau$
are known.    It is easy to see that
$$
p(y_{n+1} \vert \theta, y^n) = \frac{p(y^{n+1}, \theta))}{p(y^n, \theta)} = p(y_{n+1} \vert \theta),
$$
where $\sigma$ has been suppressed in the notation.   So, it is also easy to see that
$$
E(Y_{n+1} \vert \theta, y^n) = \theta \quad \hbox{and} \quad \hbox{Var}(Y_{n+1} \vert \theta, y^n) = \sigma^2.
$$
Since $\hbox{Var}(Y_{n+1} \vert \theta, y^n)$ is a constant, its expectation under the posterior for $\theta$
is unchnaged.  Thus, the first term on the right in \eqref{LTVsimple} is
$$
E_{\Theta \vert y^n} \hbox{Var}(Y_{n+1} \vert \Theta, y^n) = \sigma^2.
$$
For the second term on the right in \eqref{LTVsimple} recall
the posterior for $\theta$ given $y^n$ is
\begin{eqnarray}
w(\theta \vert y^n) = \frac{1}{\sqrt{ 2 \pi \tau_n^2} } e^{-(1/2\tau_n^2) (\theta - \theta_n)^2}
\nonumber
\end{eqnarray}
where 
$$
\theta_n = \left( \frac{n}{\sigma^2} + \frac{1}{\tau^2} \right)^{-1} \frac{n}{\sigma^2}\left(\bar{y} + \frac{\theta_0}{\tau^2} \right)
\quad
\hbox{and}
\quad
\tau_n^2 = \left( \frac{n}{\sigma^2} + \frac{1}{\tau_0^2} \right)^{-1} .
$$
Now,
\begin{eqnarray}
\hbox{Var}_{\Theta \vert y^n}( E(Y_{n+1} \vert \Theta, y^n) ) = \hbox{Var}_{\Theta \vert y^n} (\Theta) =  \left( \frac{n}{\sigma^2} + \frac{1}{\tau_0^2} \right)^{-1} ,
\nonumber
\end{eqnarray}
and \eqref{LTVsimple} is
$$
\hbox{Var}(Y_{n+1} \vert y^n) = \sigma^2 + \left( \frac{n}{\sigma^2} + \frac{1}{\tau_0^2} \right)^{-1}  = \sigma^2 + {\cal{O}}(1/n).
$$
This reasoning extends to the case that $\sigma$ is assigned an inverse-Gamma distribution.


Another specific example that can be evaluated in closed form
sets $Y_i \sim  N(\mu, \sigma^2)$ to be
IID for $i=1, \ldots, n$ with the usual priors
$\mu \sim N(0, \sigma^2)$ and $\sigma^2 \sim InvGamma(\alpha, \beta)$ for $\alpha > 2$ and $\beta > 0$.
For contrast, note that a different problem with the Beta-Binomial gives that the `Var-E' term dominates; see
\cite{Casella:Berger:2002}, p. 168.

\cite{Draper:1995} argues that the first term on the right (a sequence of conditional expectations following  
a conditional variance) is the main assessment of variability that authors consider.    This is consistent with
the normnal example above and holds more generally.
Indeed, the first term on the right in \eqref{LTVsimple} is
\begin{eqnarray}
&& \int  \int \left((y_{n+1} - E(Y_{n+1} \vert \theta, y^n))^2p(y_{n+1} \vert \theta, y^n) \right) {\rm d} y_{n+1}w(\theta \vert y^n) {\rm d} \theta \nonumber\\
&=& \int  \int \left((y_{n+1} - E(Y_{n+1} \vert \theta))^2p(y_{n+1} \vert \theta) \right) {\rm d} y_{n+1}w(\theta \vert y^n) {\rm d} \theta \nonumber\\
&=& \int  \hbox{Var}_\theta (Y_{n+1}) w(\theta \vert y^n) {\rm d} \theta
\label{LTVsimpleterm1}
\end{eqnarray}
and
the second term on the right in \eqref{LTVsimple} is
\begin{eqnarray}
 && \hbox{Var}_\Theta \left( E(Y_{n+1} \vert \Theta) \vert y^n) \right)
\nonumber\\
&=& \int \left(  E(Y_{n+1} \vert \theta) -   \int E(Y_{n+1} \vert \theta) w(\theta \vert y^n) {\rm d} \theta \right)^2 w(\theta \vert y^n) {\rm d} \theta
\nonumber\\
&=& \int E^2(Y_{n+1} \vert \theta) w(\theta \vert y^n) {\rm d} \theta  - \left(  \int E(Y_{n+1} \vert \theta) w(\theta \vert y^n)
{\rm d} \theta \right)^2.
\label{LTVsimpleterm2}
\end{eqnarray}

When the posterior concentrates at a true value $\theta_0$, in distribution, $L^1$ or a.e., as $n \rightarrow \infty$, \eqref{LTVsimpleterm1} converges to $\hbox{Var}_{\theta_0} (Y_{n+1})$ and \eqref{LTVsimpleterm2}
converges to zero in the same mode.   So, the first term
asymptotically dominates.  This reasoning holds anytime the posterior concentrates as it typically does 
in ${\cal{M}}$-closed problems; more generally, see \cite{Berk:1966}. 
 However, this says little about the relative sizes of the two terms in 
finite samples.  

In the general case, the inner expressions on the right in \eqref{LTVsimple} are 
$\hbox{Var}(Y_{n+1} \vert \theta, y^n)$ and $E(Y_{n+1} \vert \theta, y^n)$ and
they have different meanings.  
In particular, the first term is small 
when $\hbox{Var}(Y_{n+1} \vert \theta, y^n)$ is small over the typical region of $\theta$ under the posterior
and the second term is small
when $E(Y_{n+1} \vert \theta, y^n)$, as a function
of $\theta$, changes little, again over the typical region of $\theta$.  
Loosely, the difference is whether the variance is small or the mean changes little.

If we know for sure that the mean changes little, i.e., $E(Y_{n+1} \vert \theta, y^n)$ is nearly
constant over the range of $\theta$'s most likely under the posterior, then
$$
\hbox{Var}(Y_{n+1} \vert y^n) \approx E_\Theta ( \hbox{Var}(Y_{n+1} \vert \Theta, y^n) \vert y^n).
$$
This latter expression is esentially what is observed in Sec. 4 of \cite{Dustin:Clarke:2023}.
However,  if we know for $y^n$ that the variance is small, i.e., for the $\theta$'s most likely under the posterior
we have that $\hbox{Var}(Y_{n+1} \vert \theta, y^n)$ is small, then
$$
\hbox{Var}(Y_{n+1} \vert y^n) \approx \hbox{Var}_\Theta(E(Y_{n+1} \vert \Theta, y^n) \vert y^n) .
$$

Another way to interpret \eqref{LTVsimple} is as follows.
When the `E-Var' term is large,  relative to the `Var-E' term,  there is more variability in 
the predictive distributions from the high posterior probability models than there is variability across models
so the model doesn't matter very much; all the commonly occurring models (high posterior probability)
are good.
When the `Var-E' term is large relative to the `E-Var' term it means that the specific model used for prediction
is much more important than the variability within models used for prediction. 

\subsection{Bayesian Model Averages I}
\label{BMA1}

One step up from \eqref{2levelBHM} we can consider a Bayesian model average (BMA).
Let $j= 1, \ldots, J$ index a collection of models ${\cal{M}} = \{ M_1, \ldots , M_J\}$.
Assume each $M_j$ consists of a likelihood $p(y \vert \theta_j)$ and a prior 
$w(\theta_j, j) = w(\theta_j \vert j )w(j)$ where the across models prior $w(j)$ is discrete.
Writing $J$ for $j$ as a random variabel as well as for the number of models will cause no confusion
because the context will indicate which is meant.  Now,
we can represent this as a two level hierarhical model
\begin{eqnarray}
(J, \theta_J) &\sim& w(\theta_j,j) \nonumber \\
Y &\sim& p(y \vert \theta_j).
\label{BMA2term}
\end{eqnarray}
Now, the $L^2$ BMA predictor is
\begin{eqnarray}
E(Y_{n+1} \vert y^n) = \sum_{j=1}^J E(Y_{n+1} \vert y^n, M_j)W(M_j \vert y^n).
\label{L2BMApred}
\end{eqnarray}
In \eqref{L2BMApred},  the two conditioning random variables, namely $J$ and $\theta_j$ are treated explicitly and implicitly, respectively.   In this case, it is not hard to see that one usage of the LTV recovers the usual formula for the posterior variance.

Indeed,  using the expression for posterior variance from p. 383 of \cite{Hoeting:etal:1999}, we find that
\eqref{L2BMApred} is
\begin{eqnarray}
\hbox{Var}(Y_{n+1} \vert y^n) 
 &=&\sum_{j=1}^J \hbox{Var}(Y_{n+1} \vert y^n, M_j) W(M_j \vert y^n)
\nonumber \\
&& \quad+ \sum_{j=1}^J E(Y_{n+1} \vert y^n, M_j)^2 W(M_j \vert y^n)
- E(Y_{n+1} \vert y^n)^2 \nonumber \\
&=&  E(\hbox{Var}(Y_{n+1} \vert M_J, y^n) \vert y^n) +
\hbox{Var} (E(Y_{n+1} \vert M_J, y^n)).
\label{LTVBMAsimple}
\end{eqnarray}
So \eqref{LTVBMAsimple} is the result of using the LTV and conditioning on $M_k$.
We have treated $\theta_j$ implicitly by integrated over it
before conditioning on the $M_j$'s.   Reversing this i.e., integrating over $j$ and using the LTV
with $\Theta_k$'s would have been inappropriate for BMA.  However, 
we shall see that different treatments of conditioning variables, when they make sense, typically give different
terms for the same $\hbox{Var}(Y_{n+1} \vert y^n)$.

Let us interpret \eqref{LTVBMAsimple} similarly to how we interpreted  \eqref{LTVsimple} but using
the $M_k$'s, not the $\theta_k$'s.
When the first term on the right `E-Var' is large,  we see most variability is in the predictive
distributions from the high posterior probability models rather than from the variability across models.
The second term on the right being small means that it doesn't matter very much which model 
you use for prediction.  On the other hand, if Var-E is large, model selection is important
but the smallness of the E Var term means the high posterior probability
models are good.

For the sake of completeness, let us record another two term expression in the scope of \eqref{BMA2term}
but conditioning on $(J, \Theta_J)$ as a two dimensional random variable:
\begin{eqnarray}
\hbox{Var}(Y_{n+1} \vert y^n) =  
\hbox{Var}_{\Theta_K,K}(E(Y_{n+1} \vert \Theta_K, K, y^n) ) 
+ E_{\Theta_K, K} ( \hbox{Var}(Y_{n+1} \vert \Theta_K, K, y^n) ).
\label{3rdwayBMA}
\end{eqnarray}
This is different from \eqref{LTVBMAsimple} where we mixed out over the $\Theta_j$'s before
examining the variability in $M_J$.  That is, in \eqref{LTVBMAsimple}, $\Theta_j$'s are implicit
whereas in \eqref{3rdwayBMA} they are explicit.

\subsection{Bayesian Model Averages II:  Three Level Hierarchical Model}
\label{BMA2}

Now write \eqref{BMA2term} as an equivalent three level hierarchical model
\begin{eqnarray}
J &\sim& w(j) \nonumber \\
\theta_j \vert J=j&\sim& w(\theta_j \vert j) \nonumber \\
Y &\sim& p(y \vert \theta_j).
\label{BMA2termequiv}
\end{eqnarray}
If we apply the law of total variance first to bring $M_j$ into $\hbox{Var}(Y_{n+1} \vert y^n)$
we get \eqref{LTVBMAsimple}. 
If we then use the LTV again in the first term on the right in \eqref{LTVBMAsimple}
to bring in $\theta_j$, we get
\begin{eqnarray}
\hbox{Var}(Y_{n+1} \vert y^n, M_j) &=& E_{\Theta_j \vert y^n, M_j} \hbox{Var}_{Y_{n+1} \vert y^n, M_j, \theta_j} (Y_{n+1} \vert  y^n, M_j,  \Theta_j = \theta_j , ) 
\nonumber \\
&& + \hbox{Var}_{\Theta_j \vert y^n, M_j} E_{Y_{n+1} \vert  y^n, M_j, \theta_j}(Y_{n+1} \vert y^n, M_j,  \Theta_j = \theta_j ).
\label{BMA3term}
\end{eqnarray}
Using \eqref{BMA3term} in \eqref{LTVBMAsimple} gives
\begin{eqnarray}
\hbox{Var}(Y_{n+1} \vert y^n) 
&=&  E_J E_{\Theta_J \vert y^n, M_J} \hbox{Var}_{Y_{n+1} \vert y^n, M_J \Theta_j} (Y_{n+1} \vert  y^n, M_J,  \Theta_J) 
\nonumber \\
&& + E_J \hbox{Var}_{\Theta_J \vert y^n, M_J} E_{Y_{n+1} \vert  y^n, M_J, \Theta_J}(Y_{n+1} \vert y^n, M_J,  \Theta_J ) \nonumber \\
&&+
\hbox{Var}_J (E(Y_{n+1} \vert M_J, y^n)),
\label{LTVBMA3term}
\end{eqnarray}
an instance of \eqref{LTV2termgen}.

In \eqref{LTVBMA3term} we conditioned first on $M_J$ and then on $\Theta_J$ because the $\Theta_j$'s is naturally nested 
in the $M_j$'s.  There is nothing to prevent us from setting up a mathematical structure in which we
can condition on $\Theta_j$ first and $M_j$ second but that is not the natural way to think about this situation.
We will see an example in Subsec. \ref{simbinom} where either order of conditioning makes sense.
In general, however,  it can be seen that the order of conditioning 
affects which terms appear but the value of $\hbox{Var}(Y_{n+1} \vert y^n)$ on the left is fixed once the
hierarchy is fixed.   In particular, the two models \eqref{BMA2term} and \eqref{BMA2termequiv} are the same so the
posterior variances on the left in \eqref{LTVBMAsimple} and \eqref{LTVBMA3term} are equal pointwise in $y^n$.  
Hence the the expressions on the right are equal albeit different and are in the scope of the model.
We can choose whichever sums of term in the scope of a given model we want depending on the
variabilities of m odeling quantities that concern us most. 

It is seen from \eqref{LTVBMA3term} that a three level hierarchy can
lead to a three term expression for the posterior predictive variance because
we have used the law of total variance twice, one for each level of the
hierarchy above the likelihood.  In general, each usage of the LTV generates one extra term.

We can calso apply the law of total variance to the second term in \eqref{LTVsimple},
i.e., the last term on the right in \eqref{LTVBMA3term}.
However, that will bring in the conditional expectation of a conditional variance of a
conditional expectation which will not simplify.  Such terms while mathematically correct 
are very difficult to handle.
Moreover, the terms in \eqref{LTVBMAsimple} treat the $\Theta$ as latent and 
so depend on its distribution even though it is not explicitly indicated.  
For this reason, here, we only apply the LTV to the variances that
occur in the leading term, i.e., the one of the form `$E$ $\hbox{Var}$', not any that
have a `$\hbox{Var}$ $E$'.  That is, while the full scope of a posterior predictive
variance contains many terms from using the LTV in all possible ways, we focus
on the subset of the scope where each term has exactly one variance operation
that moves from left to right with appropriate conditioning.   By analogy of such 
sums of terms with ANOVA decompositions, see \cite{Dustin:Clarke:2023}
we call this this the Cochran Scope
or C-Scope for short and henceforth limit our attention to sums of that form.

In this treatment of posterior variance the relative size of the terms is a tradeoff among
the size of model list,  the proximity of the parametric models on the list to each
other,  the across-models prior weights on models on the list, and the within-model priors.
It's no longer purely a probabilistic model.  We have to choose which terms we
want to control in our model selection.

Another hierarchical model to which our considerations apply is studied in Subsec. \ref{simbinom}.
It conditions an outcome on possible sets of selected variables and on three possible link functions in a generalized linear
model.    This is quite different from \eqref{BMA2termequiv} because the levels in the hierarchy 
include aspects of modeling typically associated with likelihoods not priors.   However,
our generic decompositions of posterior predictive variances accommodates these possibilities readily.

\section{Generic Decompositions for the Posterior Predictive Variance:  $C$-scope case}
\label{decomposition}

Recall the generic hierarchical model \eqref{hierarchicalmodelgeneric}.
Limiting attention to $V_1$, the LTV can be applied to give
\begin{eqnarray}
\hbox{Var}(Y_{n+1} \vert {\cal{D}}_n) = E ( \hbox{Var}(Y_{n+1} \vert V_1, {\cal{D}}_n) )
+ 
 \hbox{Var} (E (Y_{n+1} \vert V_1, {\cal{D}}_n)).
\label{lawtotalvariance1}
\end{eqnarray}
In \eqref{lawtotalvariance1}, $\hbox{Var}(Y_{n+1} \vert {\cal{D}}_n)$ looks the same as in other expressions
such as \eqref{LTVsimple} and \eqref{LTVBMA3term} but in fact it depends on the full hierarchy in
\eqref{hierarchicalmodelgeneric} because the posterior predictive variance ties all levels of
the hierarchy together.  That is levels 2 through $K$ in \eqref{hierarchicalmodelgeneric} affect
the posterior predictive variance on the left -- and the terms on the right -- even though they
are suppressed in the notation.

We can now apply the law  of total variance iteratively to itself,  i.e.., to the
first term -- `E Var' -- in \eqref{lawtotalvariance1} by introducing conditioning on $V_2$.
We can do the same in the new `E  E Var' term with $V_3$
and so on, generating one new term for each $V_k$ at each iteration.  
Overall, this gives us $K+1$ terms 
involving means and variances.   
The expression for $K=2$ is now
\begin{eqnarray}
Var(Y_{n+1} \vert  {\cal{D}}_n)&=  E_{V_1}E_{V_2}Var(Y_{n+1} \vert  {\cal{D}}_n, V_1, V_2) + E_{V_1}Var_{V_2}E(Y_{n+1}\vert  {\cal{D}}_n, V_1, V_2) \nonumber \\ 
   & + Var_{V_1}E(Y_{n+1} \vert  {\cal{D}}_n, V_1) 
\label{2unidims}
\end{eqnarray}
The
left hand is a constant (given ${\cal{D}}_n$) independent of the order of conditioning on the right
although different orders of conditioning will give different terms and of course, different hierarchical models will
have different psoterior variances.

\subsection{Overall Structure}

To generalize this and hence
quantify the uncertainty of the subjective choices we must make,
recall $V= (V_{1}, \ldots , V_{K})$, where $V_k$ 
represents the values of the $k$-th potential choice that must
be made to specify a predictor. 
Analogous to terminology
in ANOVA, we call $V_k$ a \textit{factor} in the prediction scheme, and 
we define the levels of $V_k$ to be $v_{k1}, \ldots , v_{km_k}$.  That is,  $v_{k\ell}$ is a 
specific value $V_k$ may assume. Thus, we take $V$ to be discrete having probability mass 
function $W(v)=W(V_1=v_1 \ldots, V_K = v_K)$.  
Effectively we are assuming that any continuous parameters are at the first
level of the hierarchy above the likleihood and have been integrated out as in
the BMA example in Subsec. \ref{BMA1}.
The $V_k$'s are not in general
independent and $W$ corresponds to a prior on $V$. 
Define our chosen model list by
 \begin{equation}
\label{pred_scheme}
{\mathcal{V}}^{K} =  \{v_{11} ,\ldots , v_{1m_1}\} \times \ldots \times \{v_{K1} ,\ldots , v_{Km_K}\}.
\nonumber
\end{equation}
There are $m_1 \times \cdots \times m_K$ distinct models in ${\cal{V}}^{K}$ 
and they may or may not have a hierarchical structure.  

Our first result gives a decomposition of the posterior predictive variance by
conditioning on $V$.
The general $K$ case is seen in Clause (i) of Prop. \ref{General_pred_variance_prop}.
However, the order of conditioning will give different terms on the right.
The ordering is chosen so that the most important terms to the analyst can be
readily assessed.   Permuting the $V_k$'s generates all the
decompositions in the $C$-scope of the posterior predcitive variance from the BHM.
Recall \eqref{LTVP} and \eqref{LTV2termgen}.  These are one and two conditioning variable
versions of decompositions in the C-scope of a posterior predictive variance.
Clause (ii) of Prop. \ref{General_pred_variance_prop} is a variant on Clause (i) 
from collapsing all the levels
in the hierarchy above the likelihood into a single conditioning variable.

\begin{proposition}
\label{General_pred_variance_prop}
We have the following two expressions for the posterior predictive variance when the 
factors correspond to a model list.\\
{\it Clause (i):}
For $K=1$ in \eqref{hierarchicalmodelgeneric} we have 
\begin{eqnarray}
\hbox{Var}(Y_{n+1} \vert {\cal{D}}_n) = E ( \hbox{Var}(Y_{n+1} \vert V, {\cal{D}}_n) )
+ 
 \hbox{Var} (E (Y_{n+1} \vert V, {\cal{D}}_n)).
\label{lawtotalvariance}
\end{eqnarray}
and, for
$K \geq 2$ in \eqref{hierarchicalmodelgeneric}, the posterior predictive variance of $Y_{n+1}$ as function of 
the factors defining the predictive scheme is 
\begin{align}
\label{Conditional_Var_sum}
\nonumber 
Var(Y_{n+1} \vert  {\cal{D}}_n)(  {\mathcal{V}}^{K}) 
& = E_{(V_1,\ldots, V_k)} Var(Y_{n+1} \vert  {\cal{D}}_n, V_1, \ldots, V_K) \\ \nonumber 
& + \sum_{k=2}^{K} E_{(V_{1}, \ldots, V_{k-1} )} Var_{V_k}E(Y_{n+1} \vert  {\cal{D}}_n, V_1, \ldots, V_k) \\
& + Var_{V_1}E(Y_{n+1} \vert  {\cal{D}}_n, V_1).
\end{align}
\\
{\it Clause (ii):}
For any $K$, the posterior predictive variance 
$Var(Y_{n+1} \vert  {\cal{D}}_n)(  {\mathcal{V}}^{K})$ can be condensed into a two term decomposition:
\begin{align}
\label{condensed_var}
\nonumber Var(Y_{n+1} \vert  {\cal{D}}_n)( {\mathcal{V}}^{K}) 
& = E_{(V_1,\ldots, V_K)} Var(Y_{n+1} \vert {\cal{D}}_n, V_1, \ldots, V_K) \\ 
& + Var_{(V_1,\ldots, V_K)}E(Y_{n+1} \vert  {\cal{D}}_n, V_1, \ldots, V_K).
\end{align}
\end{proposition}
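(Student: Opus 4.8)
The plan is to derive both clauses from repeated use of the Law of Total Variance in its posterior form \eqref{postpredvar}, which is valid for $W=Y_{n+1}$ and any conditioning variable $Z$ — scalar or vector-valued — that is a function on the probability space carrying \eqref{hierarchicalmodelgeneric}. Throughout I would use the standing assumption that $\hbox{Var}(Y_{n+1}\vert {\cal{D}}_n)<\infty$, so that every conditional variance, every conditional expectation, and every outer expectation appearing below is finite (the outer expectations are finite sums because $V$ is discrete); this is what makes the successive substitutions and rearrangements legitimate.

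For Clause (i) with $K=1$, equation \eqref{lawtotalvariance} is simply \eqref{postpredvar} with $Z=V=V_1$, so there is nothing to prove. For $K\ge 2$ I would argue by induction on the number $m$ of factors brought inside the conditioning, the stage-$m$ statement being the identity \eqref{Conditional_Var_sum} with $K$ replaced by $m$, under the convention that for $m=1$ the summation over $k$ is empty, so the stage-$1$ statement is exactly \eqref{lawtotalvariance1}. Assuming the stage-$m$ identity for some $1\le m<K$, I would apply \eqref{postpredvar} to its leading term $\hbox{Var}(Y_{n+1}\vert {\cal{D}}_n, V_1,\ldots,V_m)$ — which has the same shape as the left-hand side of \eqref{postpredvar} with ${\cal{D}}_n$ enlarged to $({\cal{D}}_n,V_1,\ldots,V_m)$ — taking the new conditioning variable to be $V_{m+1}$, exactly as in \eqref{LTVgeniter}. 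Substituting back and then pushing the outer $E_{(V_1,\ldots,V_m)}$ through, using the tower property $E_{(V_1,\ldots,V_m)}E_{V_{m+1}\vert V_1,\ldots,V_m}(\cdot\vert{\cal{D}}_n)=E_{(V_1,\ldots,V_{m+1})}(\cdot\vert{\cal{D}}_n)$, turns the single leading term into the new leading term $E_{(V_1,\ldots,V_{m+1})}\hbox{Var}(Y_{n+1}\vert {\cal{D}}_n,V_1,\ldots,V_{m+1})$ plus the new summand $E_{(V_1,\ldots,V_m)}\hbox{Var}_{V_{m+1}}E(Y_{n+1}\vert {\cal{D}}_n,V_1,\ldots,V_{m+1})$, while the remaining terms are untouched; this is the stage-$(m+1)$ identity. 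Taking $m=K$ yields \eqref{Conditional_Var_sum}, and the case $K=2$ already recorded as \eqref{2unidims} serves as a sanity check.

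For Clause (ii), I would again invoke \eqref{postpredvar}, this time directly with $W=Y_{n+1}$ and $Z=(V_1,\ldots,V_K)$ regarded as a single vector-valued conditioning variable, which immediately gives \eqref{condensed_var}. As an alternative route matching the remark preceding the proposition, \eqref{condensed_var} also follows from \eqref{Conditional_Var_sum}: the last $K$ terms of \eqref{Conditional_Var_sum} are precisely the iterated total-variance expansion of the random variable $E(Y_{n+1}\vert{\cal{D}}_n,V_1,\ldots,V_K)$ under the nested conditionings on $V_1$, then $(V_1,V_2)$, and so on, so they collapse to $\hbox{Var}_{(V_1,\ldots,V_K)}E(Y_{n+1}\vert{\cal{D}}_n,V_1,\ldots,V_K)$.

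The argument is essentially mechanical, so the only real difficulty is bookkeeping: one must keep the conditioning sets and the subscripts on the $E$'s and $\hbox{Var}$'s exactly aligned through the induction and invoke the tower property in precisely the right form at each step. Beyond that, the one thing worth stating explicitly at the outset is the finiteness of $\hbox{Var}(Y_{n+1}\vert{\cal{D}}_n)$, which guarantees all intermediate quantities are finite and legitimizes the rearrangements; discreteness of $V$ then removes any measurability or interchange-of-summation concerns.
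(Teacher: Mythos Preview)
Your proposal is correct and follows exactly the approach the paper takes: an iterated application of the law of total variance for Clause~(i) and a single application with the vector $Z=(V_1,\ldots,V_K)$ for Clause~(ii). You have simply made the induction and the tower-property step explicit where the paper leaves them as ``straightforward,'' and your remark about finiteness of $\hbox{Var}(Y_{n+1}\vert{\cal D}_n)$ is a reasonable hygiene point that the paper does not state.
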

\begin{proof}
The proof of {\it Clause i)}  is a straightforward iterated application of the law of total variance and
{\it Clause ii)}  follows from the law of total variance simply treating $V$ as a vector
rather than as the string of its components. 
\end{proof}

As the number of conditioning variables increases, conditional variances tend to decrease.  Here, however,
that intuition breaks down because we condition on all $V_k$'s.  The question is the effect of the 
relative placement of the expectations relative to the variances.  In a limiting sense, one expects
that with IID data expectations will converge to well-defined limits, even if they are
random variables due to the conditioning thereby giving nonzero variances.  In our work we have 
focused on the last term in \eqref{Conditional_Var_sum}, often finding it to be small.
This may be an artifact of the examples we have examined.  On the other hand, our
intution is that typically the leading term in \eqref{Conditional_Var_sum} should dominate,
see Subsec. \ref{2level},
because the posterior for $V_1, \ldots , V_K$ should concentrate at a fixed value,
namely the true value if it is reachable and the value closest to the true value
(in relative entropy sense) if it is not.

A key question is what the cardinality of the $C$-scope is.
Consider \eqref{LTV2termgen} or equivalently \eqref{Conditional_Var_sum} with $K=2$.
There are five cases.
The first is \eqref{2unidims} in which $V_1$ and $V_2$ are treated
as individual conditioning variables.    The second is the same but with $V_1$ and $V_2$
interchanged.  
The third is \eqref{lawtotalvariance1}
where $V_1$ is replaced by $(V_1, V_2)$, i.e., $(V_1, V_2)$ is regarded as a single
random variable that happens to be bivariate.  The fourth is 
\eqref{lawtotalvariance1} in which $V_2$ is regarded as integrated out before
conditioning.  In this case, we say $V_2$ is latent and $V_1$ is manifest.
The fifth case is the reverse, i.e.,  we treat $V_1$ as latent and $V_2$ as manifest.
Expression \eqref{LTVBMAsimple} is an example of this if $V_2=\Theta_K$ and
$V_1=K$.  
In all five cases, $\hbox{Var}(Y_{n+1} \vert y^n)$ is the same even though the decompositions are
different.   So, we have five different ways to model the posterior
variance in the same hierarchical model.  While they are equivalent
mathematically and come from the same hierarchical model they are not
equivalent statistically.  This counting argument can be generalized to arbitrary $K$.
(Write out new proposition giving $\#(C-\hbox{scope})$ as a function of $K$.)

\subsection{Choosing $V_K$}
\label{choosingVk}

In the usual physical modeling scenario the decompositions presetned here will be of limited
use largely because scientists don't often include irrelevant quantities in their models.
However, when the BHM is not physically motivated, e.g., the $V_k$'s are mathematical
aspects of the likelihood, it will often not be clear which $V_k$'s or values of $V_k$'s
are important to retain.  Reducing the number of values of a $V_k$ is not as important
as dropping a $V)k$, but as in the factor levels of ANOVA can be useful.

An example may help. 
One choice of $V_K$, with $K=2$, that can often be used to expand a model list
to capture more possibilities and then winnow down to the most successful of them
is the following.    Consider trying to assess the importance of sets of variables. 
Suppose we have a list of models
${\cal{M}} = \{m_1, \ldots, m_q \}$ and a set of explanatory variables 
${\cal{X}} = \{ X_1, \ldots, X_p\}$.   If $q=2$, we may have $m_1$ 
as the linear model and $m_2$ as some choice of non-linear model.
Write ${\cal{P(X)}} = \{ \{X\}_1,\ldots , \{X\}_{2^p} \}$ to mean
the power set of ${\cal{X}}$. Now we can consider each model with each subset of 
explanatory variables as inputs to the modeling.   Here, $V_1$ corresponds to
the uncertainty in the predictive problem due to the models and $V_2$
corresponds to the variable selections we use in the models.   We use a version
of this in Subsec. \ref{disaster}.  The idea is that the experimenter has little
information about which variables are included so the formal mathematical approach
is to include all the a priori plausible ones and examine the predictive 
properties of the resulting model.

Now, we can use the decomposition in Subsec. \ref{BMA1} or \ref{BMA2}.  In addition,
using a Bayes model average we write the posterior predictive density
\begin{eqnarray}
p(Y_{n+1} \vert {\cal{D}}_n) 
= \sum^q_{i=1} p( m_i \vert {\cal{D}}_n) \sum^{2^p}_{j=1} 
p(\{X\}_j \vert {\cal{D}}_n, m_i ) p(Y_{n+1} \vert {\cal{D}}_n, \{X\}_j, m_i),
\label{postpredBMA}
\end{eqnarray}
generically denoting densities as $p$.
Now, we can calculate the posterior probability for each 
set of explanatory variables from
\begin{eqnarray}
 p(\{X\}_j\vert {\cal{D}}_n) = \sum^q_{i=1} p( m_i \vert {\cal{D}}_n) p(\{X\}_j \vert {\cal{D}}_n, m_i ).
 \label{postprobvarset}
\nonumber
\end{eqnarray}
This posterior probability is a measure of ``variable set importance''.  A similar expression gives 
a measure of importance for an individual model.   
Thus, we anticipate that when the levels in the BHM represent mathematical quantities that
we do not have a physical basis for including or excluding we can use our 
decompositions to assess whether factors $V_k$ or factor levels of a $V_k$ are worth including.

We can represent any conditioning quantity 
as $V = (V_1, \ldots , V_K)^T$.   As in Subsec. \ref{2level} for $K=1$, $V_1$ might 
simply be a parameter.  As in Subsec. \ref{BMA2},
for $K=2$, $V_1$ might be a model and $V_2$ might correspond to a parameter.
Or, as in Sec. \ref{Draper}, $V_1$ may correspond to the choice of link function in a GLM
while $V_2$ may correspond to selections of explanatory
variables (as above).  
Thus,  we must choose
a $K$ and we can regard each $V_k$ as an aspect of a modeling strategy.  
For instance,  if $K=2$, $V_1$ may be a `scenario' and $V_2$ may be a `model' in the sense of
\cite{Draper:1995}, a parallel we develop in Sec. \ref{challenger}.  
We will write as if the $V_k$'s
are discrete modeling choices remembering that the law of total variance
applies for continuous random variables as well.

Note that the sort of hierarchical modeling we advocate here can be artificial in the
sense that we use mathematical modeling, simplify the model down to the quantities that seem to
matter predictively, and then use the resulting model to form PI's.  Once good predcition 
has been achieved the
quest for a more realistic  model (that probably will not perform as well predictively)
can begin and compared with the formal model our approach should yield.
Consequently, we advocate the generation of multiple BHM's, using different
mathematical features.  In Sec. \ref{Draper}, we use link functions in a GLM as a conditioning
variable.   In \cite{Dustin:Clarke:2023}, we used a single $V$ to represent
the choice of a shrinkage method in penalized linear regression. 

Typically, one of the most important levels in  a BHM is the selection over models.
We can enlarge model list simply by including more plausible models.
However, this may lead to problems such as dilution; see \cite{George:2010}.
So, we want to assess the effect of a model list on the variance
of predictions.    Consider a model list ${\cal{M}}$ and suppose
we don't believe it adequately captures the uncertainty (including mis-specification)
of the the predictive problem. 
We can expand the list by including other competing 
models and this can be done by adding more models to it or by embedding the
models on the list in various `scenarios' as is done in \cite{Draper:1995}. 
Once a new model 
list ${\cal{M}'}$ is constructed, if it contains new models with positive posterior probability, 
the posterior predictive distribution $p(Y_{n+1} \vert {\cal{D}}_n)$ resulting from
${\cal{M}'}$ will be differ from $p(Y_{n+1} \vert {\cal{D}}_n)$ resulting from
${\cal{M}}$.    Hence, we can use the decompositions here to help decide which
of model ${\cal{M}}$ and ${\cal{M}'}$ is more reasonable and hope that both
simplify to the same predictor.

In reality, the relative sizes of terms in the various
decompositions of the form \eqref{Conditional_Var_sum}
depend heavily on the 
choice of $K$, $V_K$, and the likelihood.   Fortunately,
in practice, we usually only have one largest $V_K$ that we most want to consider
even though different orderings of the
$V_k$'s will affect which terms appear in the decomposition.    One benefit of more elaborate
hierarchical models is that we can use the levels to specify a likelihood so
that features of the likeilihood can be examined via our decompositions.
Once $K$ and $V_K$ have been chosen, the posterior predictive 
variance decomposition based on $V_K$ can
be generated and examined for which terms are important.  
We regard the selection of 
$V$ in general as an aspect of mathematical modeling in that it seem hard to
propose a univeral method.  The appropriate choice of levels in the BHM will be data set dependent.
We have given examples here to show the flexibility of the approach.

Using stacking -- or any other model averaging procedure -- in place of the BMA
leads to results analogous to Prop. \ref{General_pred_variance_prop};
 see \cite{Dustin:Clarke:2023}.

\section{Draper's Work on Model Uncertainty}
\label{Draper}

 Here we redo and extend some of the key examples developed by Draper.

\subsection{Revisting Draper (1995)}
\label{challenger}

Here we apply our techniques to two examples given in \cite{Draper:1995} and one further
example that his second example motivates.
The first example involves predicting the price of oil; the second example
involves predicting the chance of failure of O-rings in a space shuttle at a new temperature.
Our third example for this section is an extension of the latter data type with
a more difficult variable selection problem.  
Draper's main point was when making predictions, we need to consider the uncertainty of the `structural' choices we make or we can be lead to bad decisions.    Here, we have
formalized Draper's concept of structural choices in our conditioning variable $V$.
One danger in poor structural choices is that a PI may be found that is
unrealistically small leading to over-confidence.

\subsubsection{Oil Prices}

In the oil prices example in \cite{Draper:1995} there are two structural components to
the modeling namely,  12 economic scenarios with 10 economic models nested inside them.
These components represent 120 models and hence
introduce model uncertainty that must be quantified to generate
good PI's.  

In Draper's analysis each model was used given the parameters of each scenario.
This corresponds to $K =2$ and a three term posterior predictive variance decomposition.
Let $s_i$ denote scenario $i$ and $m_{ij}$ be model $j$ within scenario $i$.
Write $s_i \in S$ and $m_{ij} \in M_i \subset M$ where $M_i$ is the set of models
for scenario $i$ and $M$ is the union of the $M_i$'s.   Now, we have
\begin{align}  
\nonumber Var(Y_{n+1} \vert  {\cal{D}}_n) (S,M)  &=E_S E_MVar(Y_{n+1} \vert  {\cal{D}}_n , S , M)  \\
\nonumber &+ E_S Var_M(E(Y_{n+1} \vert  {\cal{D}}_n , S, M) )\\
&+ Var_S(E(Y_{n+1} \vert  {\cal{D}}_n , S)).
\label{DraperOil}
 \end{align}
The corresponding decomposition given by Draper is
$\nonumber Var(Y_{n+1} \vert  {\cal{D}}_n) = 178 + 363 + 354 = 895$.  We
cannot recompute this example because neither the data nor the details on the scenarios or
models are available to us.   However, in this case it is seen that
the between-scenarios variance, i.e., term \eqref{DraperOil}, contributes about 40\% to
the posterior predictive variance.   The second term on the right,  the between-models
within scenarios variance, is also about 40\%  The variance attributable to the predictions
within models and scenarios is about 20\%.    (See Table \ref{vartable} for the definition of terms.)
Thus all the three terms must
be used when forming PI's.

\subsubsection{Challenger Disaster}
\label{disaster}

Making the decision
to launch the space shuttle at an ambient temperature at which the
various components had not been tested ended up being catastrophic
-- and could have been avoided had a proper uncertainty analysis had been done.  
Statistically, the error of the decision makers was to choose a single model from a
model list rather than incorporating all sources of predictive uncertainty into their analysis.
The goal of this example originally was to show that a correct analysis
of the various sources of uncertainty would have led to a PI for $p_{t=31}$,
the probability of
an O-ring failure (at $31^\circ$) of $(.33,1]$ i.e., too high for a launch to
be safe.
Our goal in re-analyzing Draper's example is to identify which sources of
uncertainty can be neglected.    

We have 23 observations of the number of damaged O-rings ranging from zero to
six (because each shuttle had six O-rings).    Each observation also has a temperature $t$
and a `leak-check' pressure $s$.   Following Draper's analysis we also use $t^2$
as an explanatory variable.  Thus we have 24 vectors, each of length four.

We assume the number of damaged O-rings follows a $Binomial(6, p)$ distribution
where $p$ is a function of the explanatory variables via one of three link functions,
logit, $c\log\log$, and probit.  Thus, we have structural uncertainty in the choice
of variables and in the choice of link function.  In our notation, we set
$V_1 = \{L, C, P \}$ for the choice of
link function, logit, $c\log\log$, and probit respectively. Also let
$V_2 = \{t, t^2, s, \text{no effect}\}$ where no effect means an intercept-only model.  
The 24 models are summarized in \ref{Tab_challenger2}.

\begin{table}[ht]
\caption{\textbf{List of models for the Challenger disaster data:}  This table lists
all 24 models under consideration broken down by their structural choices -- link functions
and explanatory variables.}
\label{Tab_challenger2}
\centering
 \resizebox{\textwidth}{!}{  
\begin{tabular}{ccccccccccccccccccc}
  \hline
 ${\cal{V}}^{(2)}$ &  $m_1$& $m_2$ & $m_3$ & $m_4$ & $m_5$ & $m_6$ & $m_7$& $m_8$ & $m_{9}$ & $m_{10}$ & $m_{11}$  & $m_{12}$ & $m_{13}$\\
  \hline
$V_1$   &  L & L & L & L & L  & L & L&L   &  C & C & C & C & C    \\
$V_2$  & $t$ &  $t^2$& $s$ & $t, t^2$ & $t, s$ & $t^2, s$ & $t, t^2, s$   & no effect  & $t$ &  $t^2$& $s$ & $t, t^2$ & $t, s$ \\
   \hline

  \hline
 ${\cal{V}}^{(2)}$  & $m_{14}$ & $m_{15}$ & $m_{16}$& $m_{17}$& $m_{18}$ & $m_{19}$ & $m_{20}$ & $m_{21}$ & $m_{22}$& $m_{23}$& $m_{24}$  \\
  \hline
$V_1$  & C&C & C  &  P & P & P & P & P&P  & P  &P   \\
$V_2$  & $t^2, s$ & $t, t^2, s$   & no effect & $t$ &  $t^2$& $s$ & $t, t^2$ & $t, s$& $t^2, s$  & $t, t^2, s$   & no effect  \\
   \hline
\end{tabular}
}
\end{table}

In fact, Draper did not consider all of these models.   Essentially he put zero
prior probability on all models except for $m_1, m_4, m_5,m_7,m_8$, and $m_{15}$.
Accordingly,  he only considered the set
$$
 {\cal{M}} = \{m_1, m_4, m_5, m_7, m_8, m_{15} \}
$$
with a uniform prior.
Draper then gave a table of posterior quantities for the structural choices, and a
posterior predictive variance decomposition for within-structure and between-structure
variances as
\begin{eqnarray}
Var(p_{t=31}\vert  {\cal{D}}_{23})= Var_{within} + Var_{between} = 0.0338 + 0.0135 = 0.0473.
\label{twotermDraper}
\end{eqnarray}
That is, even though there were two structural choices, Draper used a decomposition
appropriate for one.  This corresponds to using our result Clause (ii) in
Prop. \ref{General_pred_variance_prop}.  Draper's conclusion was that
$.0135/.0473 \approx 28.5\%$ so the uncertainty represented by the second term
in \eqref{twotermDraper} could not be neglected.

Here we extend Draper's analysis and confirm that structural uncertainty should not be ignored.
For our implementation, we use the full set of 24 models but do not employ the same
approximations.   Then, we use the {\sf{BMA}} package in R to get
the posterior distributions of the parameters of the models and the posterior weights
for $V_2$.  We also use the ${\sf rjmcmc}$ package to get the posterior weights
for $V_1$.
The resulting posterior distributions are qualitatively similar to
Draper's approximate posteriors.

Considering all sources of uncertainty yields a posterior predictive variance decomposition of
\begin{align}
\label{var_challenger_3term}
\nonumber Var(p_{t=31}\vert  {\cal{D}}_{23})&=  E_{V_1}E_{V_2}Var(p_{t=31} \vert  {\cal{D}}_{23}, V_1, V_2) + E_{V_1}Var_{V_2}E(p_{t=31} \vert  {\cal{D}}_{23}, V_1, V_2) \\ \nonumber
   & + Var_{V_1}E(p_{t=31} \vert  {\cal{D}}_{23}, V_1) \\ \nonumber
   &=  0.01469 + 0.0996 + 0.0017 \\
   &= 0.11599.
\end{align}
This is almost three times the variance as obtained by Draper.  We confirm his intuition that
structural uncertainty was much greater than assumed when making the decision
to launch the shuttle.   Moreover, Draper commented that other analyses could lead to larger
posterior variances.  So, \eqref{var_challenger_3term} is consistent with his intuition.

Looking at the numbers in \eqref{var_challenger_3term} we can see the last is an order of 
magnitude smaller than the other two.
Thus, we conclude that the terms representing the between-models within-link functions variance
and the between-predictions within-models and links variance are terms that must be retained.
A frequentist testing approach confirms this; see \cite{Dustin:Clarke:2022}.
So, we would be led to consider a new hierarchical model that did not include $V_1$
but had a two term decomposition with a new value of
$\hbox{Var}(Y_{n+1} \vert {\cal{D}}_n)$ and we would compare it with the first two
terms on the right in \eqref{var_challenger_3term}
to see which expression for the posterior predcitive variance was more convincing.

\subsection{Simulated binomial example}
\label{simbinom}

Finally, we study a simulated example following the same structure as the Challenger data.
That is we simulated $n$ observations from a binomial generalized linear model
$$
Y_i \vert p_i \sim Binomial(30,p_i),
$$
where $p_i = \frac{1}{1+e^{-X'_i\beta}}$, in which $X'_i \sim N(0,1)$ is a $1 \times 10$ 
vector of explanatory variables, and 
$$
\beta = (0.75, 0.25, -0.3,0.5,0,0,0,0,0,0)
$$ 
is a 
$10 \times 1$ vector of true regression parameters. Here we let 6 entries in $\beta$ be zero to represent a meaningful model selection problem. In this problem, we again recognize 
three sources of structural uncertainty: predictive uncertainty across-models and link functions 
(`predictions'),
models across link functions (`models', $V_2$), and link functions (`links', $V_1$).  Our goal is to 
study the effect of the sample size on each each term in the posterior 
predictive variance decomposition,  as well as each of the test statistics.

We continue to use a three term decomposition like that in Subsec.  \ref{disaster}:
\begin{eqnarray}
\nonumber Var(Y_{n+1}\vert  {\cal{D}}_{n}) &=&  E_{V_1}E_{V_2}Var(Y_{n+1} \vert  {\cal{D}}_{n}, V_1, V_2)
+ E_{V_1}Var_{V_2}E(Y_{n+1} \vert  {\cal{D}}_{n}, V_1, V_2)
\nonumber \\
 &+&  + Var_{V_1}E(Y_{n+1} \vert  {\cal{D}}_{n}, V_1) 
\label{simbinomdecomp}
\end{eqnarray}
but our `$Y_{n+1}$' here is the number of successes in 30 trials, a random variable, as opposed to
a probability such as $p_{t=31}$.    In order, the three terms are predictions, models, and links.

As sample size increases,
the link functions proportionately contribute more and more
to the overall variance where as the models contribute less and the 
predictions are stable.  We comment that the overall variance actually
decreases with sample size so the relative importance of, say, links, may
increase even as its absolute importance decreases.

These conclusions are reinforced by Fig. \ref{simulated_binom}.
On the left panel we see that all four variances decrease with $n$, the
top curve representing the sum of the three lower curves.
The right panel shows that as expected the relative contribution of models
decreases monotonically.  It also shows that as $n$ increases, the curves for
links and predictions approach each other.  In simulation results not
shown here, the two curves actually cross around $n=475$ and suggest that
by $n=900$ or so that the curve for predictions will indicate a relatively small
contribution to the decreasing total variance curve compared to the relative 
contribution of links.  However, by this point, the total variance is so small
that the relative contributions of the terms is  not important.
\begin{figure}
\begin{center}
\begin{tabular}{cc}
\includegraphics[width=.45\columnwidth]{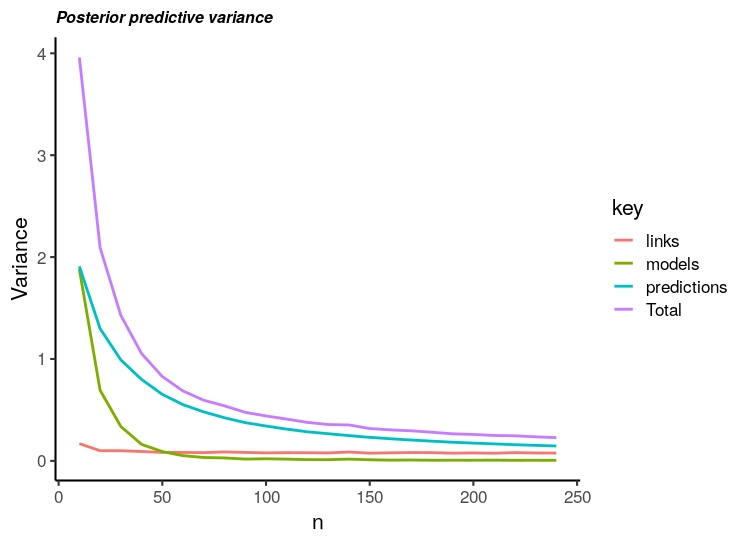}
\includegraphics[width=.45\columnwidth]{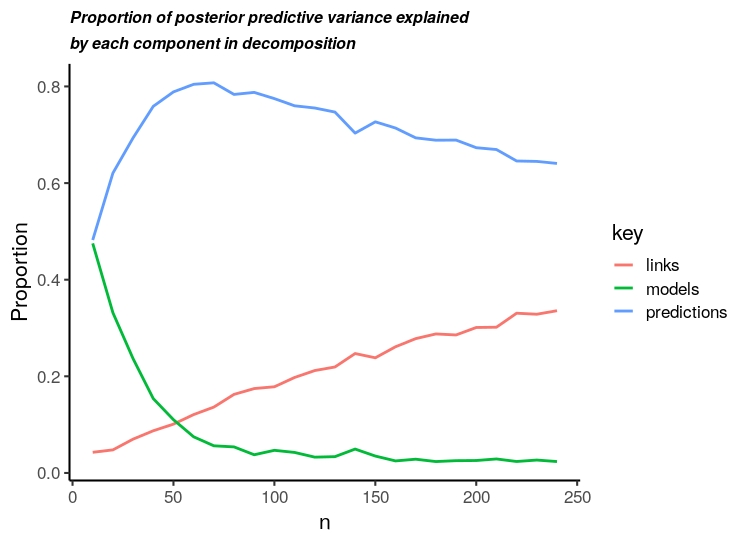}
\end{tabular}
\end{center}
\caption{\textbf{Plots of posterior predictive variances as a function of $n$
 for each term in the decomposition. } Left: The actual values of the terms in the total posterior
predictive variance.  Right: The proportions each term contributes to the total variance.
 Both sets of curves show the results for link function, models, and predictions.}
\label{simulated_binom}
\end{figure}

Overall interpretation:  Models Term:  The models term decreases fast in both figures.  We think this means that
variable selection is fast so its contribution to the posterior variance is small.
Links term:  The term for links on the left stabilizes quickly so as the other terms decrease
absolutely it increases relatively.  Predictions term:  This decreases on the left but more slowly or
from a higher start that the other curves.  Then it decreases faster than the links term.
So because of the rates, it increases and decreases relatively in the right.


\section{Discussion}
\label{discuss}

The main point of this paper is to provide a decomposition of posterior predictive
variances for a future
value from a
Bayesian hierarchical model.  This is important because the posterior predictive variance
controls the width of prediction intervals and we want to know what aspects of variance
are contributing most to the width.  Our decompositions use a fixed BHM and hence
decompose a fixed posterior variance into terms using the law of total variability iteratively.
There are many possible decompositions for a posterior variance; they depend on the
ordering of the conditioning features in the BHM and whether features are ggrouped
together or not.  We focus on what we call the $C$-scope of a BHM.  This is the
collection of decompositions of the posterior predictive variance that arise from 
using the law of total variance
only on terms in which an expectation of a variance appears.

In examples, we have shown how some of these decompositions behave.  The idea is
to propose a BHM with levels that are mathematical and perhaps
 essentially artificial but nevertheless gives good predictions.  Then our approach can
winnow down the conditioning varaibles to simplify the model and identify which
terms contribute most to the width of PI's.

One drawback of the method is that it is unclear how to assess the relative contributions
of terms in the decomposition.  This is so because in general we do not have a likelihood
for these terms and therefore cannot do Bayes testing directly.  OTOH, there are
techniques that avoid the specification of likelihoods and priors relying instead on
a loss function structure to produce a viable posterior.  We have not investigated this
possibility, but it is promising as it is in the spirit of the mathematical modeling we
advocate here, namely being willing to use mathematical qusantities without
physical motivation as a way to produce predictive analyses.

Implicit in our work is the view that we should construct BHM not by identifying
levels with priors but rather by conditioning.  We can arbitrarily write condition expectations
and variance without concerning our selves with prior selection.    We will have to assign priors
at some point -- often simply uniform since most of the levels in the BHM will be discrete.
That is, rather than thinking our way physically to what a given $V_k$ should mean
we can proceed artificially choosing $V_k$'s to condition on simply by letting them correspond 
to mathematical features of models such as variable selection, choice of prior, choice of
decay parameter in shrinkage models, etc. etc.  By ignoring physical modeling we are
free to contruct any class of predictors we please and assess them.  We call this conditional
modeling as opposed to hierarchical modeling because it is a change in
prsepctive even if the resulting models are mathematically equivalent.
One obvious benefit of conditional modeling is that the
components of modeling correspond directly to the components of the variance decomposition.
We would expect that the `right' varaince decomposition would use the level
of the BHM in the same order as they appear in the BHM.  That is, we would imagine
conditioning first on the top level, second on the level below it and so on.

One of the unexpected effects of our approach is that by using variances
we are focusing on the metric properties of a model, not just its
probabilistic properties.  After all, hierarchical models are strictly probabilistic
whereas our assessment of variances is based on size.  
Thus, we are converting a probabilistic modeling strategy into a metric modeling strategy.

We conclude with two entertaining observations.
First, for $K=1$,
$V$ may represent
the choice of a shrinkage method i.e., a penalty term in penalized linear regression.  The penalty
corresponds to a prior, so our method includes a technique for assessing the 
variability due to prior selection, i.e., it can directly assess how much a level in the
hierarchy that correspond to a prior contributes.  As such it can be used to
assess prior selection provided there is a hyperparameter above it in the hierarchy.
In a sense, we can asssign a sort of standard error to prior selection in terms of
how much wide it makes a PI.

Second, the treatment we have given for variance can, in principle, be extended
to higher level moments even though it looks hard.  For instance,
\cite{Brillinger:1969} gives a way to calculate cumulants of a distribution that can be a posterior quantity.
He gives a formula similar to our Prop.  \ref{General_pred_variance_prop} and gives examples
using this result for sums of variables and mixture distributions.  The order of the
cumulants is arbitrary but lower orders would likely be easier to use.

\begin{acks}[Acknowledgments]
Dustin acknowledges funding from the University of Nebraska
Program of Excellence in Computational Science.  Both authors acknowledge computational support from the Holland Computing Center at the University of Nebraska.
\end{acks}

\bibliographystyle{ba}
\bibliography{references}

\end{document}